\documentclass[preprint,12pt]{elsarticle}

\usepackage{geometry}
\usepackage[latin1]{inputenc}
\usepackage{graphicx}
\usepackage{amssymb}
\usepackage{amsmath}
\usepackage{epstopdf}

\def\bq{\begin{equation}}
\def\eq{\end{equation}}
\def\bqa{\begin{eqnarray}}
\def\eqa{\end{eqnarray}}

\def\ba{\begin{array}}
\def\ea{\end{array}}

\def\pmat{\begin{pmatrix}}
\def\emat{\end{pmatrix}}

\def\calA{{\mathcal A}}

\def\calM{{\mathcal  M}}

\def\bbR{\mathbb R}
\def\bbC{\mathbb C}
\def\bbZ{\mathbb Z}

\newcommand{\Cnp}[2]%
{%
{#2 \choose #1}
}

\def\XXint#1#2#3{{\setbox0=\hbox{$#1{#2#3}{\int}$}
     \vcenter{\hbox{$#2#3$}}\kern-.5\wd0}}

\newtheorem{prop}{Proposition}
\newtheorem{thm}{Theorem}

\newtheorem{definition}{Definition}

\newtheorem{proof}{Proof}

\begin{document}
\begin{frontmatter}
\title{Topology of Bloch Bands from Cauchy Data}

\author{Didier Felbacq,\, Emmanuel Rousseau}
\affiliation{organization={Laboratory Charles Coulomb UMR 5221, University of Montpellier},
            addressline={Place Eugène Bataillon}, 
            city={Montpellier},
            postcode={34095 Cedex 05},
            country={France}}

\begin{abstract}

In a previous work, the topology of inversion-symmetric one-dimensional periodic media was characterized through the pole-zero pattern of an impedance-like function associated with Bloch waves. This construction reproduces the Berry--Zak invariant and provides a criterion for topological interface states.

In the present work, we give a geometric interpretation of this formalism. We show that poles and zeros arise naturally from the action of inversion symmetry on the projectivized space of Cauchy data. The corresponding Dirichlet and Neumann states are identified with the two fixed points of the induced $\bbZ_2$ action on the Riemann sphere.

The key observation is that Bloch eigenvectors are naturally constructed on the universal covering of the Brillouin circle. The topology of the associated Real eigenline bundle is encoded in the action of the deck transformation group on lifted eigenvectors. This action is described by a monodromy sign $\rho\in\{\pm1\}$, determined by the inversion representations carried by the band at the fixed points of the Brillouin zone.

At this point, no characteristic class has to be introduced. The only datum is the sign acquired by a lifted eigenvector after one turn around the Brillouin circle. This sign is the clutching datum of a real line bundle over $S^1$: gluing the two ends of an interval by $+1$ gives the trivial bundle, whereas gluing them by $-1$ gives the Möbius bundle. The first Stiefel--Whitney class is the standard topological invariant which records this distinction. Thus, in the present setting, $w_1$ does not introduce an additional structure; it is the usual topological name for the monodromy sign $\rho$.

We show below that this monodromy is determined by the inversion representations at the two fixed points of the Brillouin circle and coincides with the $\bbZ_2$ pole-zero invariant introduced in \cite{FelbacqAnnalen}. Equivalently, the first Stiefel--Whitney class recovers the $\bbZ_2$ reduction of the pole-zero classification.

\end{abstract}
\end{frontmatter}
\section{Introduction}

Topological band theory is usually formulated in terms of vector bundles over momentum space \cite{topophot,colloquium,simon,zak,thouless}. In periodic systems, the eigenstates of an isolated band define a vector bundle over the Brillouin zone whose characteristic classes provide topological invariants \cite{milnor,nakahara,nash}. In two-dimensional systems, the central invariant is the first Chern class, whose integral gives the Chern number. In one-dimensional systems, the situation is simpler: the complex Bloch bundle over the Brillouin circle is always topologically trivial, and topology can only arise through additional symmetries \cite{kohn,zak,Chan2014,FelbacqAnnalen}.

For one-dimensional inversion-symmetric systems, the relevant invariant is the Berry--Zak phase. Since inversion symmetry quantizes this phase to the values $0$ and $\pi$, one obtains a $\mathbb Z_2$ classification of Bloch bands. In modern language, this classification may be expressed in terms of the first Stiefel--Whitney class of a Real line bundle associated with the band \cite{zirnbauer,atiyah,gomi}.

In a previous work \cite{FelbacqAnnalen}, a different approach was proposed. Instead of introducing a Berry connection, the analysis was based on the Cauchy data of Bloch eigenvectors. If $U=(u,u')$ denotes the Cauchy vector associated with a Bloch mode, one introduces the projective coordinate $\chi=u/u'$ which is a meromorphic function of the wavenumber. The poles and zeros of $\chi$ occur at band edges and form an ordered pattern along the spectrum. It was shown that this pole-zero pattern reproduces the Berry--Zak classification and provides a criterion for the existence of topological interface states.

Although effective, the geometric origin of this construction remained unclear. Why should poles and zeros encode topology? What is the significance of the distinguished vectors $(1,0)$ and $(0,1)$, which play a central role in the theory? How does the pole-zero invariant relate to standard topological notions such as Real bundles, local systems and characteristic classes?

The purpose of the present work is to answer these questions.

Our starting point is the observation that inversion symmetry acts naturally on the space of Cauchy data. After projectivization, the Cauchy-data space becomes the Riemann sphere $\mathbf{CP}^1$ (see fig. \ref{cp1}).

\begin{figure}[h!]
	\begin{center}
		\includegraphics[width=8cm]{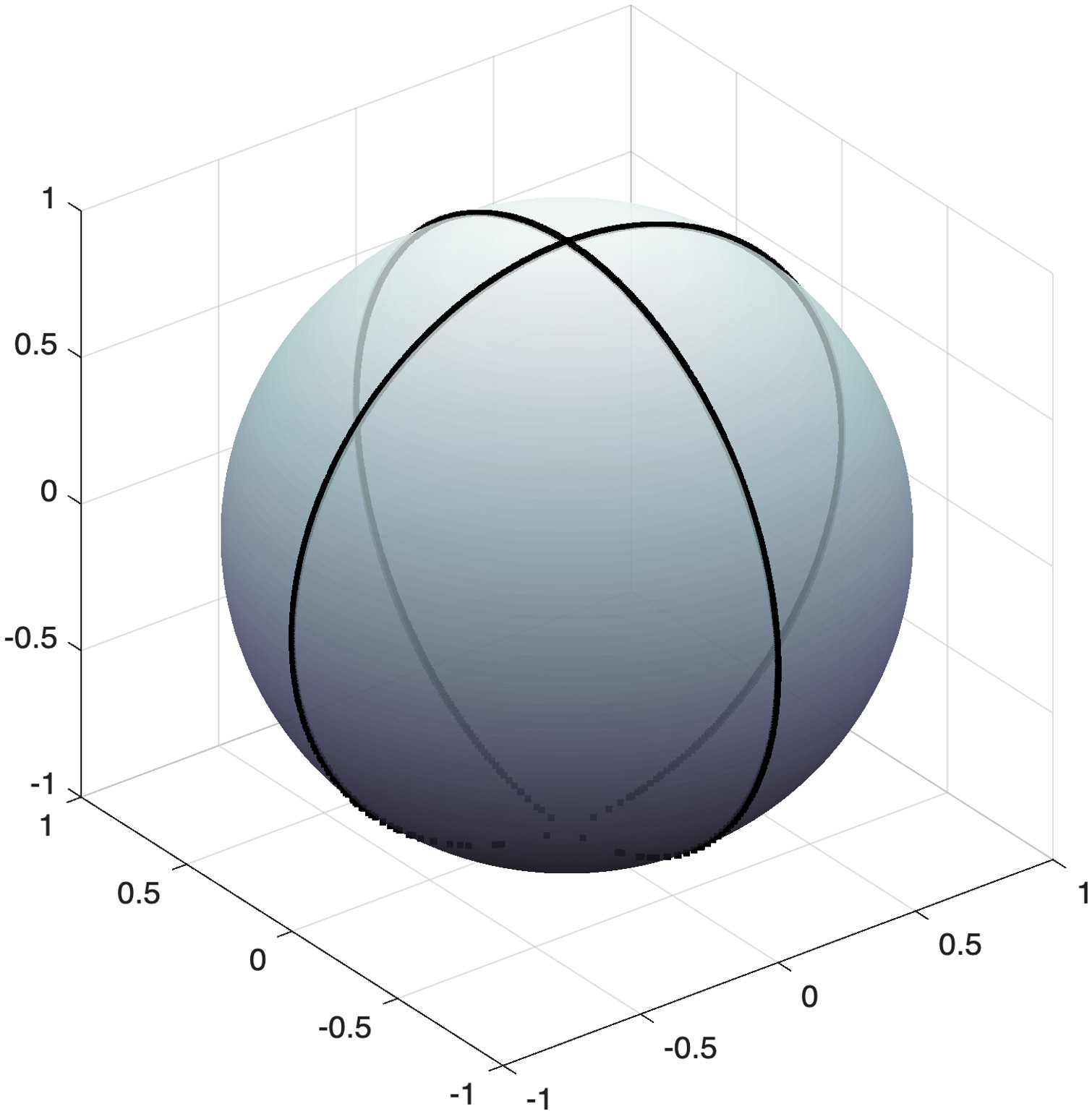}
		\caption{\label{cp1} Trajectory of the Cauchy data on the Riemann sphere obtained numerically by a direct computation of $\chi$. The North and South poles correspond to the Neumann and Dirichlet conditions respectively. }
	\end{center}
\end{figure}

The Dirichlet and Neumann directions appear as the two fixed points of the induced $\mathbb Z_2$ action. Poles and zeros are therefore interpreted as the two irreducible representations of the inversion group acting on projective Cauchy data.
A second key observation is that Bloch eigenvectors are naturally constructed on the universal covering of the Brillouin circle rather than on the circle itself. Let $p:\mathbb R\longrightarrow S^1$ be the universal covering map. The pull-back of the Bloch eigenline bundle over $\mathbb R$ is trivial and therefore admits a global non-vanishing section. The topology of the original bundle is encoded in the action of the deck transformation $q\mapsto q+2\pi$ on such a lifted section.

The crucial point is that the failure of an inversion-equivariant Bloch section to descend from the universal covering naturally defines a monodromy representation :
$$
\rho:\pi_1(S^1)\simeq\bbZ \longrightarrow \{\pm1\}.
$$
This representation determines a rank-one local system over the Brillouin circle. The inversion representations carried by a Bloch band at the two fixed points of the Brillouin zone completely determine this monodromy. Denoting by $s_0,s_\pi\in \{\pm1\}$ the inversion eigenvalues at $q=0$ and $q=\pi$, respectively, one obtains $\rho=s_0s_\pi$.

The main result  (cf. theorem \ref{thm1}) identifies the obstruction to equivariant gluing with a rank-one local system whose monodromy is
$\rho=s_0s_\pi$. This monodromy determines the first Stiefel--Whitney class of the associated Real eigenline bundle.

Once the monodromy representation is known, one may construct the associated local coefficient system \cite{hatcher,kirk}. The resulting twisted cohomology appears naturally from the topology of the band rather than from an \emph{a priori} choice of coefficients. The associated Real eigenline bundle, in the sense of Atiyah \cite{atiyah}, is classified by its first Stiefel--Whitney class. The pole-zero invariant introduced in \cite{FelbacqAnnalen} is then recovered as a concrete analytic realization of this topological structure.

The paper is organized as follows. Section 2 reviews the monodromy description of Bloch waves and introduces the projective space of Cauchy data. Section 3 describes the action of inversion symmetry on projective Cauchy data and revisits the pole-zero classification. Section 4 introduces the equivariant gluing problem and the associated local system. Section 5 identifies the resulting monodromy with the first Stiefel--Whitney class and relates it to the Berry--Zak phase. Section 6 discusses local coefficient systems and twisted cohomology. Section 7 revisits the bulk-edge correspondence from this perspective. Section 8 provides a numerical illustration of the results . Finally, Section~9 presents several perspectives concerning higher-dimensional and non-Hermitian periodic systems.
\section{Bloch bands, projective Cauchy data and inversion symmetry}

We consider a one-dimensional periodic medium described by a scalar Helmholtz equation
$u''+k_0^2V(x)u=0$. The extension to the Schrödinger equation is straightforward.
The potential or material coefficient $V$ is assumed to be bounded and periodic with period one:
$V(x+1)=V(x)$.
We further assume that the medium is inversion-symmetric, so that the origin may be chosen in such a way that
$V(x)=V(-x)$.

The second-order equation may be rewritten as a first-order system for the Cauchy vector
$$
U(x)=
\begin{pmatrix}
u(x)\
u'(x)
\end{pmatrix}^t.
$$
in the form:
$$
\frac{dU}{dx}
=
\begin{pmatrix}
0&1\\
-k_0^2V(x)&0
\end{pmatrix}
U.
$$

Let $\calM(k_0)$ denote the monodromy matrix over one period: $U(1)=\calM(k_0) U(0)$ \cite{felbacq1998,felbacq2003}.
When the equation is real and conservative, the monodromy matrix is unimodular, i.e. $\det(\calM)=1$. Its eigenvalues determine the usual Floquet-Bloch structure. If $|\operatorname{tr}\calM|<2$, the eigenvalues lie on the unit circle and can be written
$z=e^{iq},\, z^{-1}=e^{-iq}$.
The parameter $q$ is the Bloch quasi-momentum, defined modulo $2\pi$. For a fixed isolated band, the Bloch eigenspaces form a complex line bundle over the Brillouin circle $S^1=\bbR/2\pi \bbZ$.
Since every complex line bundle over $S^1$ is topologically trivial, the complex Bloch bundle alone cannot carry a non-trivial Chern-type invariant. The topological information appears only after taking into account the additional inversion symmetry.

The point of view adopted here is that an eigenvector of the monodromy matrix is simultaneously a Cauchy datum. Thus the relevant space is not first the Hilbert space of Bloch modes, but the two-dimensional space of Cauchy data
$\mathcal C=\mathbb C^2$.
Since eigenvectors are defined only up to multiplication by a non-zero scalar, the natural target space is the projectivization
$\mathbf P(\mathcal C)=\mathbf{CP}^1$.
The coordinate $\chi=\frac{u}{u'}$ is the standard affine coordinate on this Riemann sphere.

\begin{figure}[h!]
	\begin{center}
		\includegraphics[width=13cm]{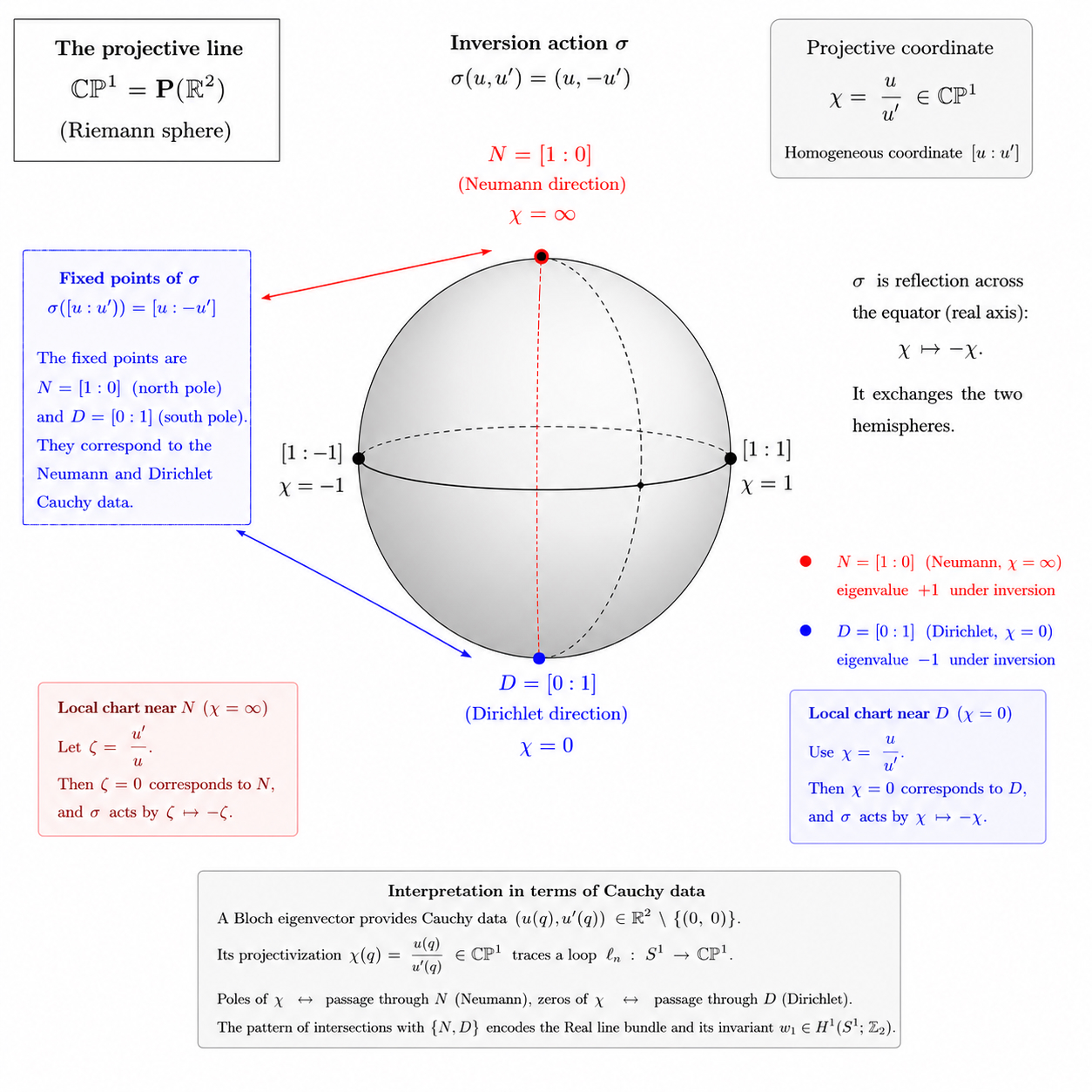}
		\caption{The projective space $\bbC \mathbb{P}^1$ as the space of projectivized Cauchy data. The inversion action $\sigma (u,u')=(u,-u')$ induces a reflection $\chi \to -\chi$ on $\bbC \mathbb{P}^1$ . The fixed points $N=[1:0]$ (Neumann condition, $\chi=\infty$) and $D=[0:1]$ (Dirichlet condition, $\chi=0$) are the poles and the zeros of the projective coordinate $\chi=u/u'$. Their inversion eigenvalues are $+1$ and $-1$ respectively.}
	\end{center}
\end{figure}

\subsection{Inversion symmetry and projective Cauchy data}

We now describe the action of inversion symmetry on the Cauchy-data space. If the origin is chosen at a fixed point of the inversion $x\mapsto -x$, then $u(x)\mapsto u(-x)$, whereas $u'(x)\mapsto -u'(-x)$.
At the fixed point $x=0$, this induces the linear action $\sigma(u,u')=(u,-u')$ on $\mathcal C=\mathbb C^2$, represented by the matrix
$$
\sigma=
\begin{pmatrix}
1&0\\
0&-1
\end{pmatrix}.
$$
The Cauchy-data space decomposes into the two eigenspaces of $\sigma$:
$$
{\mathcal C^+=\{u'=0\},
\,
\mathcal C^-=\{u=0\}.}
$$
The first is the Neumann direction, while the second is the Dirichlet direction. 

The action of $\sigma$ descends to an involution on $\mathbf{CP}^1$: $[u:u']\longmapsto [u:-u']$.
In the coordinate $\chi$ this is simply $\chi\longmapsto -\chi$.
The fixed points of this involution are precisely
$$
\chi=0,
\,
\chi=\infty.
$$
Thus the poles and zeros of $\chi$ are not accidental singularities of a coordinate. They are the two fixed points of the inversion action on the projectivized Cauchy-data space.

\begin{prop}
The Dirichlet and Neumann Cauchy directions are precisely the two projective fixed points of the inversion action on $\mathbf{CP}^1$.
\end{prop}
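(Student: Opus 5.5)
The plan is to compute the fixed-point set of the induced involution directly and compare it with the Dirichlet and Neumann lines. A point $[u:u']\in\mathbf{CP}^1$ is fixed by the involution $[u:u']\mapsto[u:-u']$ if and only if there is a scalar $\lambda\in\bbC^*$ with $(u,-u')=\lambda(u,u')$. Equivalently, the line spanned by $(u,u')$ is an eigenline of the matrix $\sigma$. So the projective fixed points are exactly the projectivized eigenspaces of $\sigma$. This is the first step, and it uses nothing beyond the definition of the action descending from $\mathcal C$ to $\mathbf P(\mathcal C)$.

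The second step is to solve $(u,-u')=\lambda(u,u')$ with $(u,u')\neq 0$. If $u\neq0$, then $\lambda=1$, which forces $-u'=u'$, hence $u'=0$. If $u=0$, then $u'\neq0$, so $\lambda=-1$. The fixed points are therefore exactly $[1:0]$ and $[0:1]$, the projectivizations of $\mathcal C^+=\{u'=0\}$ and $\mathcal C^-=\{u=0\}$. There are no others, because $\sigma$ has the two distinct eigenvalues $\pm1$ with one-dimensional eigenspaces. As a cross-check in the affine chart, $\chi=-\chi$ on $\bbC$ gives only $\chi=0$. The point at infinity is fixed because the map $\chi\mapsto-\chi$ extends to $\infty\mapsto\infty$; in the chart $1/\chi$ the map is $w\mapsto -w$, which fixes $w=0$.

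The last step is to identify these points with boundary conditions. $[0:1]$ means $u(0)=0$, the Dirichlet condition, which corresponds to $\chi=0$. $[1:0]$ means $u'(0)=0$, the Neumann condition, which corresponds to $\chi=\infty$. Their $\sigma$-eigenvalues are $-1$ and $+1$, respectively.

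The only point needing care, and hardly an obstacle, is the point $\chi=\infty$: it is invisible in the affine coordinate and must be handled in the second chart or in homogeneous coordinates. I will also note that the result depends on placing the origin at an inversion center. That choice is what makes the action of inversion on Cauchy data the diagonal matrix $\sigma$, as established just before the statement.
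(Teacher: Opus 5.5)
Your proof is correct and follows the same route as the paper. Both write the fixed-point condition as $(u,-u')=\lambda(u,u')$ for a nonzero scalar and solve it, obtaining $[1:0]$ (Neumann, $\lambda=1$) and $[0:1]$ (Dirichlet, $\lambda=-1$); your case split on whether $u=0$ is slightly more careful than the paper's, which simply asserts that $\mu=\pm1$ are the only possibilities.
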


\begin{proof}
The fixed points of the induced projective action satisfy $[u:u']=[u:-u']$.
This means that there exists $\mu \in\mathbb C^\times$ such that:
$$
u=\mu u,\, -u'=\mu u'.
$$
If $\mu=1$, then $u'=0$, giving the Neumann point $[1:0]$. If $\mu=-1$, then $u=0$, giving the Dirichlet point $[0:1]$. These are the only possibilities.
\end{proof}

\subsection{Pole-zero patterns and inversion representations}

We now recall the pole-zero classification introduced in \cite{FelbacqAnnalen}. For an isolated band, the eigenline of the monodromy matrix defines a map $S^1\longrightarrow \mathbf{CP}^1$.
The affine chart $\chi=u/u'$ defines a meromorphic function $k_0 \to \chi(k_0)$. At band edges, the Bloch multiplier satisfies
$z=\pm1$.
These are precisely the fixed points of the involution $z\longmapsto z^{-1}$ on the Brillouin circle.

At such fixed points, the Bloch eigenspace is invariant under inversion and therefore carries one of the two irreducible representations of $\mathbb Z_2$. These representations are detected by the Cauchy-data direction:
$$
N=[1:0]\quad \Longleftrightarrow \quad s=+1,
$$
and
$$
D=[0:1]\quad \Longleftrightarrow \quad s=-1.
$$
Equivalently, poles correspond to the even representation and zeros correspond to the odd representation.

Let $s_0,s_\pi\in \{\pm1 \}$ denote the inversion eigenvalues of the band at the two fixed points $q=0,\, q=\pi$.
The four possible combinations are
$$
(+,+),\quad (-,-),\quad (+,-),\quad (-,+).
$$
In terms of Pole-Zero patterns, these correspond respectively to
$$
PP,\quad ZZ,\quad PZ,\quad ZP.
$$
The result of \cite{FelbacqAnnalen} may be reformulated as follows.
\begin{prop}
For an inversion-symmetric one-dimensional band, the pole-zero pattern determines the product
$s_0s_\pi$.
The cases $PP$ and $ZZ$ give $s_0s_\pi=+1$,
whereas the cases $PZ$ and $ZP$ give $s_0s_\pi=-1$.
\end{prop}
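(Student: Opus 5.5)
The plan is to reduce the statement to the dictionary just established between Cauchy-data directions at the band edges and inversion eigenvalues. There are two steps. First, we show that at each fixed point of the Brillouin circle the Bloch eigenline is one of the two fixed points of the projective involution of Proposition 1. Second, we read off $s_0$ and $s_\pi$ from whether $\chi$ has a pole or a zero there, and multiply.

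\textbf{Step 1.} Let $q_*\in\{0,\pi\}$, so $z=e^{iq_*}=\pm1=z^{-1}$. We will use inversion symmetry of $V$ in the form of the intertwining relation
$$\sigma\,\calM(k_0)\,\sigma=\calM(k_0)^{-1}.$$
This holds because the solution $x\mapsto u(-x)$ of the same equation has Cauchy data $\sigma U$ at the origin, and the period map for it runs backwards.

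Given this relation, if $\calM U=zU$ then $\calM(\sigma U)=z^{-1}\sigma U$. At $q_*$ we have $z=z^{-1}$, so $\sigma U$ lies in the same Bloch eigenspace. For an isolated band at a simple band edge this eigenspace is one-dimensional, so $\sigma U=sU$ with $s\in\{\pm1\}$ since $\sigma^2=1$. The projective class $[U]$ is then a fixed point of $[u:u']\mapsto[u:-u']$. By Proposition 1 it is either $N=[1:0]$ with $s=+1$ (a pole of $\chi$) or $D=[0:1]$ with $s=-1$ (a zero of $\chi$). This is exactly the correspondence $N\leftrightarrow s=+1$, $D\leftrightarrow s=-1$ stated above, and it defines $s_0$ and $s_\pi$ unambiguously.

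\textbf{Step 2.} We go through the four patterns. $PP$ means both edges carry $N$, so $s_0=s_\pi=+1$. $ZZ$ means both carry $D$, so $s_0=s_\pi=-1$. $PZ$ and $ZP$ give one eigenvalue of each sign. Hence $s_0s_\pi=+1$ for $PP$ and $ZZ$, and $s_0s_\pi=-1$ for $PZ$ and $ZP$. In particular the pattern determines the product. It is worth remarking that the converse fails, since $PP$ and $ZZ$ are not distinguished by the product; the statement only claims the forward direction.

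\textbf{Main obstacle.} The delicate point is Step 1 at a band edge, where the monodromy matrix may fail to be diagonalizable.
\begin{itemize}
\item If $\calM\neq\pm I$ has the double eigenvalue $\pm1$, it is a nontrivial Jordan block. Its eigenspace is then a single line, and the argument goes through unchanged.
\item In the degenerate case $\calM=\pm I$ (gap closing), every line is an eigenline. One must then define the band's Cauchy direction as the limit of the eigenline as $q\to q_*$ from within the band. Invariance of that limit under $\sigma$ follows by continuity from $\sigma$ mapping the $z$-line to the $z^{-1}$-line.
\end{itemize}
I would handle this by assuming isolated bands, as the paper does, which excludes the gap-closing case. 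The other point requiring some care is the convention that the inversion acts at $x=0$, chosen as a symmetry centre, so that $\sigma=\mathrm{diag}(1,-1)$ is the correct representation on Cauchy data.
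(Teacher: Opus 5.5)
Your argument is correct and follows the paper's route. The paper simply asserts that the band-edge eigenline is inversion-invariant, hence equal to $N$ (pole, $s=+1$) or $D$ (zero, $s=-1$), and then reads off the four patterns; you additionally justify that invariance via $\sigma\calM\sigma=\calM^{-1}$ and the one-dimensionality of the eigenspace when $\calM\neq\pm I$.

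One caveat: your gap-closing remark is wrong. Continuity only shows that $\sigma$ sends the limit of the eigenline as $q\to q_*^-$ to the limit as $q\to q_*^+$, and at a closed gap these generally differ. For example, for $V\equiv1$ at $k_0=\pi$ they are $[1:\pm i\pi]$, which are neither $N$ nor $D$. This is harmless, because you exclude that case by assuming the band is isolated.
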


\begin{table}[h]
\centering
$$
\begin{array}{c|c|c|c|c}
s_0 & s_\pi & \text{pole-zero pattern} & \rho & w_1(L_n)\\
\hline
+1 & +1 & PP & +1 & 0\\
-1 & -1 & ZZ & +1 & 0\\
+1 & -1 & PZ & -1 & 1\\
-1 & +1 & ZP & -1 & 1
\end{array}
$$
\caption{Inversion representations at the two fixed points of the Brillouin circle and the corresponding monodromy of the Real eigenline bundle. Poles correspond to the Neumann direction and to the \(+1\) representation of inversion, while zeros correspond to the Dirichlet direction and to the \(-1\) representation.}
\end{table}
The table summarizes the central point of the construction. The pole-zero pattern determines the inversion eigenvalues $s_0,s_\pi\in \{\pm1 \}$
at the two fixed points of the Brillouin circle. Their product gives the monodromy $\rho=s_0s_\pi$,
and this monodromy determines the first Stiefel--Whitney class of the Real eigenline bundle.
This proposition is the analytic input from the pole-zero formalism. The purpose of the next sections is to identify the geometric meaning of this sign.

\section{Equivariant gluing, local systems and monodromy}

Let $L_n\longrightarrow S^1$ be the complex eigenline bundle associated with an isolated Bloch band. Since $S^1$ is one-dimensional, $L_n$ is trivial as a complex line bundle. However, this statement does not take into account inversion symmetry.

The relevant object is the corresponding Real line bundle in the sense of Atiyah \cite{atiyah}. Recall that a Real bundle is a complex bundle over a space endowed with an involution, together with a compatible lift of this involution to the total space. In the present case, the involution on the Brillouin circle is $q\mapsto -q$, while the lift is induced by inversion symmetry acting on Bloch eigenvectors. The resulting Real structure contains topological information that is invisible at the level of the underlying complex line bundle, i.e., although the underlying complex line bundle is trivial, the resulting Real bundle may be topologically non-trivial.

Let $p:\mathbb R\longrightarrow S^1$ be the universal covering map. Pulling back $L_n$ gives a line bundle $p^*L_n\longrightarrow \mathbb R$. Since $\mathbb R$ is contractible, this lifted bundle is trivial. Hence one may choose a global non-vanishing lifted eigenvector $U(q)\in p^*L_n$.
There is no obstruction at this level.

The obstruction appears when one tries to descend $U$ to the Brillouin circle. The deck transformation group of the covering is
$\pi_1(S^1)\simeq\mathbb Z$,
generated by $q\longmapsto q+2\pi$.
Since the fiber is a real line once the inversion-compatible structure is imposed, the lifted section satisfies
$U(q+2\pi)=\rho U(q)$, with $\rho\in \{\pm1 \}$.

\begin{definition}
The sign $\rho\in \{\pm1\}$ defined by $U(q+2\pi)=\rho U(q)$ is called the monodromy of the Real eigenline bundle associated with the band.
\end{definition}

The monodromy is independent of the particular non-vanishing lifted section. Indeed, replacing $U$ by $fU$, where $f$ is a non-vanishing continuous real function on $\mathbb R$, does not change the sign acquired under the deck transformation.

\begin{prop}
The monodromy sign $\rho$ defines a rank-one local system over the Brillouin circle.
\end{prop}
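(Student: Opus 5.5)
The plan is to build the local system explicitly from the covering $p:\mathbb R\to S^1$ and the sign $\rho$, using the standard equivalence between local systems of rank one and representations of the fundamental group \cite{hatcher,kirk}. The first step is to turn the single sign $\rho$ into a group homomorphism. Since $\pi_1(S^1)\simeq\mathbb Z$ is free on the generator $\gamma$ (the deck transformation $q\mapsto q+2\pi$), there is a unique homomorphism $\rho:\mathbb Z\to\{\pm1\}\subset \mathrm{GL}_1(\bbR)$ with $\rho(\gamma)=\rho$, namely $n\mapsto\rho^n$. Consistency with the definition follows by iterating $U(q+2\pi)=\rho U(q)$, which gives $U(q+2\pi n)=\rho^n U(q)$ for all $n\in\bbZ$. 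Independence from the chosen lifted section was noted just before the statement, so this homomorphism depends only on the band.

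Next, form the associated flat line bundle
$$
\mathcal L_\rho=\bigl(\mathbb R\times\bbR\bigr)/\bbZ,\qquad n\cdot(q,t)=(q+2\pi n,\rho^{n}t).
$$
The $\bbZ$-action is free and properly discontinuous, so the quotient is a real line bundle over $S^1$. Its transition functions are locally constant with values in $\{\pm1\}$. Hence the sheaf of locally constant sections is locally isomorphic to the constant sheaf $\underline{\bbR}$, which is exactly a rank-one local system. Its holonomy around the Brillouin circle is $\rho$ by construction. Concretely, one covers $S^1$ by two arcs, takes the constant trivialization of each arc lifted to $\bbR$, and checks that the gluing on the two overlap components is $1$ and $\rho$ respectively.

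The last step is to relate this local system to the eigenline bundle itself. The map $q\mapsto U(q)$ identifies the real line spanned by the lifted eigenvector with the fiber of $\mathcal L_\rho$, through $(q,t)\mapsto tU(q)$. The relation $U(q+2\pi)=\rho U(q)$ is precisely what makes this map $\bbZ$-equivariant, so it descends to an isomorphism between the real (inversion-compatible) eigenline bundle and $\mathcal L_\rho$.

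I expect this last identification to be the main obstacle, because it relies on the fiber being a real line so that the ratio $U(q+2\pi)/U(q)$ is a constant sign rather than a varying phase. I would take this reality, as imposed by the Real structure in the preceding discussion, as given. The remaining point is then only continuity, which forces the ratio to be locally constant on $\bbR$, hence constant.
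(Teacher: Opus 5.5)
Your proof is correct and is essentially the paper's: the deck generator $q\mapsto q+2\pi$ acts on the fiber by $\rho$, which gives the representation $n\mapsto\rho^n$ of $\pi_1(S^1)\simeq\bbZ$ in $\{\pm1\}$, and your flat bundle $\mathcal L_\rho$ spells out the standard passage from this representation to a rank-one local system that the paper takes for granted (your third step, identifying $\mathcal L_\rho$ with the eigenline bundle, goes beyond what the statement asks). One correction to your closing remark: continuity of the nowhere-vanishing real ratio $U(q+2\pi)/U(q)$ only forces its sign to be constant, not the ratio itself (replacing $U$ by $e^{q^2}U$ multiplies it by $e^{4\pi q+4\pi^2}$), so the exact relation $U(q+2\pi)=\pm U(q)$ needs a normalization such as $\|U(q)\|=1$; this does not affect the proposition, since $\rho\in\{\pm1\}$ is built into the paper's definition and its independence remark concerns only the sign.
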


\begin{proof}
The universal covering $p:\mathbb R\rightarrow S^1$ trivializes the pulled-back eigenline bundle. The deck transformation group is naturally identified with $\pi_1(S^1)\simeq\mathbb Z$, generated by $q\mapsto q+2\pi$. By definition of the monodromy, $U(q+2\pi)=\rho\,U(q)$, with $\rho\in \{\pm1\}$.
Therefore the generator of the fundamental group acts on the fiber by multiplication by $\rho$. This defines a representation $\rho:\pi_1(S^1)\rightarrow \{\pm1 \}$, and hence a rank-one local coefficient system over the Brillouin circle.
\end{proof}

\begin{prop}
For an inversion-symmetric isolated band, the monodromy satisfies $\rho=s_0s_\pi$.
\end{prop}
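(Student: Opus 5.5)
We prove $\rho=s_0s_\pi$ by exploiting inversion symmetry to reduce the whole Brillouin circle to the half-interval $[0,\pi]$, fixing the section at the two fixed points $q=0$ and $q=\pi$, and then counting the signs picked up when the section is reflected back to cover the other half.

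\emph{Step 1: the section on $[0,\pi]$.} Inversion symmetry gives an operator $\Sigma$ on Bloch eigenvectors with $\Sigma L_n(q)=L_n(-q)$ and $\Sigma^2=1$. On the Cauchy data this is the matrix $\sigma$. Choose a continuous non-vanishing section $U(q)$ on the closed half-interval $[0,\pi]$. Because $[0,\pi]$ is contractible, such a section exists. At each endpoint the fiber is $\Sigma$-invariant, and by the previous discussion it is the Neumann line (eigenvalue $+1$) or the Dirichlet line (eigenvalue $-1$). Hence
\[
\Sigma U(0)=s_0U(0), \qquad \Sigma U(\pi)=s_\pi U(\pi).
\]

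\emph{Step 2: extend by reflection.} For $q\in[-\pi,0]$, set $U(q):=\Sigma U(-q)$.
\begin{itemize}
\item At $q=0$ the two definitions differ by the factor $s_0$.
\item To get a continuous section on $[-\pi,\pi]$, use $U(q)=s_0\Sigma U(-q)$ on the negative side. This matches at $q=0$, since $s_0\Sigma U(0)=s_0^2U(0)=U(0)$.
\end{itemize}
Extend this periodically-up-to-sign to all of $\mathbb R$. This produces the lifted section of the Definition, which the paper states is independent of the choice of section.

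\emph{Step 3: read off the monodromy.} Compare the values at the two ends of the interval, $q=-\pi$ and $q=\pi$. These are identified under the deck transformation $q\mapsto q+2\pi$. Using $\Sigma U(\pi)=s_\pi U(\pi)$,
\[
U(-\pi)=s_0\Sigma U(\pi)=s_0s_\pi U(\pi).
\]
Rewriting in terms of the deck transformation,
\[
U(\pi)=U(-\pi+2\pi)=\rho\, U(-\pi).
\]
Together these give $\rho=(s_0s_\pi)^{-1}=s_0s_\pi$, since each sign squares to one.

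\emph{Main obstacle: making $\rho$ well defined.} The difficulty is not the computation above but justifying that $\rho$ is a genuine sign. For a generic complex section one only has $U(q+2\pi)=\lambda U(q)$ with $\lambda\in\mathbb C^\times$, and any such $\lambda$ can be rescaled to $1$. The real-line structure must therefore come from restricting to \emph{$\Sigma$-equivariant} sections, that is, sections satisfying $U(-q)=\pm\Sigma U(q)$. Imposing equivariance pins the phase at the fixed points. This is exactly why the construction in Step 2 is canonical.

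I would then verify two invariance statements:
\begin{itemize}
\item Changing the section on $(0,\pi)$ by a non-vanishing function $f$ multiplies both ends consistently, provided the equivariant extension is rebuilt from the rescaled section. So $\rho$ is unchanged, matching the remark after the Definition.
\item The ratio $\rho$ is insensitive to the sign convention $s_0\Sigma$ versus $\Sigma$, since such a global constant cancels in the ratio. This confirms that the result depends only on the pair of inversion eigenvalues, consistent with the PP/ZZ versus PZ/ZP dichotomy of the earlier Proposition.
\end{itemize}
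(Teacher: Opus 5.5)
Your Steps 1--3 are correct and follow the paper's own argument: a section on $[0,\pi]$, continued to the other half of the circle by inversion, with one sign contributed by each fixed point. You make the paper's phrase that the equivariant continuation ``applies the inversion representation once at $q=0$ and once at $q=\pi$'' more precise. In your version $s_0$ appears as the continuity correction at $q=0$, and $s_\pi$ as the comparison of the two ends $q=\pm\pi$ inside the same fibre.

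\textbf{The error is in your last bullet.} It is false, and it is inconsistent with your own computation. The factor $s_0$ in $U(q)=s_0\Sigma U(-q)$ does not cancel in the ratio. It multiplies only the reflected half, so it enters $U(-\pi)$ but not $U(\pi)$. With the uncorrected extension $U(q)=\Sigma U(-q)$ you would get $U(-\pi)=s_\pi U(\pi)$, i.e.\ a ``monodromy'' $s_\pi$. For $s_0=-1$ that section would also jump by a sign at $q=0$. So the $s_0$ is forced by continuity, not chosen by convention, and it is exactly where the first factor of $\rho$ comes from. The invariance that does hold is under $\Sigma\mapsto-\Sigma$: this flips both $s_0$ and $s_\pi$, leaves $s_0\Sigma$ (hence your extension) unchanged, and leaves $s_0s_\pi$ invariant.

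\textbf{Your ``main obstacle'' paragraph also needs correcting.} Equivariance under the linear map $\Sigma$ does not make the fibres real; what it does is remove the $\mathbb{C}^\times$ rescaling freedom. The clean well-definedness statement is this. Take any continuous section on $\mathbb{R}$ with $U(-q)=c\,\Sigma U(q)$ and $U(q+2\pi)=\lambda U(q)$, where $c$ and $\lambda$ are constants. Evaluating at $q=0$ forces $c=s_0$, and then evaluating at $q=\pi$ forces $\lambda=s_0s_\pi$. This subsumes your first bullet.

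If you want an honest real line bundle, use the antilinear map $\Sigma K$, with $K$ denoting complex conjugation. It preserves each fibre because $\mathcal{M}$ is real and $\sigma\mathcal{M}\sigma=\mathcal{M}^{-1}$, and its fixed Cauchy data have $u$ real and $u'$ imaginary. If you choose $U$ in this real subbundle on $[0,\pi]$, your reflected extension stays in it, so the same computation gives its clutching sign.
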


\begin{proof}
Choose a lifted Bloch section on the fundamental domain
$[0,\pi]$. At the fixed points of the inversion, the section belongs to the eigenspaces
$$
U(0)\in E_{s_0},
\qquad
U(\pi)\in E_{s_\pi}.
$$
To reconstruct the full Brillouin circle, the interval $[\pi,2\pi]$ is obtained from $[0,\pi]$
through the inversion symmetry. The equivariant continuation therefore applies
the inversion representation once at $q=0$ and once at $q=\pi$. Consequently the lifted section acquires the factor
$s_0s_\pi$ after one complete turn around the circle. Hence
$$
U(q+2\pi)=s_0s_\pi\,U(q),
$$
which proves
$$
\rho=s_0s_\pi .
$$
\end{proof}
Thus the pole-zero pattern determines a representation
$$
\rho:\pi_1(S^1)\simeq \mathbb Z \to \{\pm1\}.
$$
The geometric significance of this representation will be clarified in the next section, where it will be interpreted both as the clutching datum of a Real line bundle and as the monodromy of a local coefficient system.

\section{Real bundles, Stiefel--Whitney class and Berry--Zak phase}

The monodromy sign $\rho$ defines a real line bundle over the Brillouin circle. Indeed, the circle may be obtained from an interval by gluing its endpoints. For a real line bundle, the clutching function belongs to $GL(1,\mathbb R)=\mathbb R^\times$.
Up to homotopy, there are only two possibilities: $+1,\, -1$.
These two cases correspond to the trivial real line bundle and the M\"obius line bundle. The first Stiefel--Whitney class classifies real line bundles over $S^1$: $w_1(L_n)\in H^1(S^1;\mathbb Z_2)$.
The two possible monodromies are exactly the two possible values of $w_1$: $\rho=+1\quad\Longleftrightarrow\quad w_1(L_n)=0$,
and $\rho=-1 \quad\Longleftrightarrow\quad w_1(L_n)\neq0$.

Combining this observation with the previous section gives the main result.

\begin{thm}\label{thm1}
Let $L_n\to S^1$ be the Real eigenline bundle associated with an isolated inversion-symmetric Bloch band. Let $s_0,s_\pi\in \{\pm1\}$
be the inversion eigenvalues at the two fixed points of the Brillouin circle. Then $w_1(L_n)=0$ if and only if $s_0s_\pi=+1$, and $w_1(L_n)\neq0$ if and only if $s_0s_\pi=-1$.
Equivalently, the monodromy of the Real eigenline bundle is $\rho=s_0s_\pi$.

The $\mathbb Z_2$ pole-zero invariant introduced in \cite{FelbacqAnnalen} coincides with the first Stiefel--Whitney class of the Real eigenline bundle: $\eta_{\rm PZ}=w_1(L_n)$.
This statement should be understood as a geometric reinterpretation of this invariant.
\end{thm}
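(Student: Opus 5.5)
The theorem is essentially assembled from three ingredients already established: the clutching classification of real line bundles over $S^1$, the proposition $\rho=s_0s_\pi$, and the proposition relating pole-zero patterns to $s_0s_\pi$. I would therefore prove it by composing these identifications, while making the first one rigorous in the Real setting.

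\textbf{Step 1: reduce $L_n$ to a real line bundle.} On the fundamental domain $[0,\pi]$, I would use the inversion-compatible structure to fix a real form of each fibre. Concretely, composing inversion $\sigma$ with the antiunitary (complex-conjugation) symmetry of the real Helmholtz equation gives an antilinear involution on each fibre. Its fixed set is a real line, and these lines form a real line bundle $L_n^{\mathbb R}\subset L_n$. A continuous real section exists on $[0,\pi]$, and the non-vanishing lift $U(q)$ on $\mathbb R$ is built from it. By the remark after the Definition, changing the section by a non-vanishing real function leaves the sign $\rho$ unchanged. So $\rho$ is the clutching datum of $L_n^{\mathbb R}$, taken as an element of $\pi_0(\mathbb R^\times)=\{\pm1\}$.

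\textbf{Step 2: identify the clutching sign with $w_1$.} Real line bundles over $S^1$ are classified by $[S^0\!\to\!\ldots]$, or more simply by $H^1(S^1;\mathbb Z_2)\cong\mathrm{Hom}(\pi_1(S^1),\mathbb Z_2)$. The element $w_1$ is exactly the orientation character, i.e. the monodromy of the orientation double cover. Gluing by $+1$ gives the trivial bundle, with orientable total space, so $w_1=0$. Gluing by $-1$ gives the Möbius bundle, which is non-orientable, so $w_1\neq0$. This is the paragraph preceding the theorem, which I would state as a lemma: $w_1(L_n)$ evaluated on the generator of $\pi_1$ equals $\rho$ under $\{\pm1\}\cong\mathbb Z_2$.

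\textbf{Step 3: combine.} Applying the proposition $\rho=s_0s_\pi$ gives $w_1(L_n)=0\iff s_0s_\pi=+1$ and $w_1(L_n)\ne0\iff s_0s_\pi=-1$. Defining $\eta_{\rm PZ}\in\mathbb Z_2$ as $0$ for the patterns $PP,ZZ$ and $1$ for $PZ,ZP$, the earlier pole-zero proposition gives $(-1)^{\eta_{\rm PZ}}=s_0s_\pi$. Hence $\eta_{\rm PZ}=w_1(L_n)$.

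\textbf{Main obstacle.} The delicate point is Step 1 together with its interaction with the proof of $\rho=s_0s_\pi$. I must check that the real structure is well defined and continuous across the whole circle, not only at the fixed points $q=0,\pi$ where the fibres are $\sigma$-invariant. I must also check that extending the section from $[0,\pi]$ to $[\pi,2\pi]$ by inversion produces a continuous section whose endpoint mismatch is exactly $s_0s_\pi$. My first attempt would use the explicit section $U(q)=(\mathcal M_{12},\,z-\mathcal M_{11})$, or its Dirichlet/Neumann-adapted variant. I would track its sign at the band edges via $\chi=u/u'$. At a pole $\chi=\infty$ (Neumann, $s=+1$) the section can be continued without a sign change. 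At a zero (Dirichlet, $s=-1$) the $u$-component changes sign under $\sigma$. Counting these flips over the two fixed points then gives the factor $s_0s_\pi$.
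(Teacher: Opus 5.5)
Your proposal is correct and is essentially the paper's proof: the theorem is obtained by composing the pole-zero proposition, the proposition $\rho=s_0s_\pi$, and the clutching classification of real line bundles over $S^1$. Your real form $L_n^{\mathbb R}=\{U:\sigma\bar U=U\}$ (fibres with $u$ real and $u'$ imaginary) usefully makes explicit the real structure the paper only invokes when it says the fibre becomes a real line. One correction to your obstacle paragraph: $\sigma$ flips $u'$, not $u$, and the section $(\mathcal M_{12},z-\mathcal M_{11})$ has a simple zero at a Dirichlet edge ($u=\mathcal M_{12}=-\sin^2 q/\mathcal M_{21}$ vanishes to second order while $u'=i\sin q$ changes sign), so there is one sign flip per factor $s=-1$, which gives $\rho=s_0s_\pi$.
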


\begin{proof}
The pole-zero invariant distinguishes the cases $PP/ZZ$ from the cases $PZ/ZP$. The first pair corresponds to
$s_0s_\pi=+1$, and the second to $s_0s_\pi=-1$.
By the identification of the monodromy with $s_0\,s_\pi$ and the classification of real line bundles over $S^1$, this is precisely the distinction between $w_1(L_n)=0$ and $w_1(L_n)\neq0$.
\end{proof}

This is the main geometric meaning of the pole-zero formalism. Poles and zeros are not merely analytic singularities of an impedance function. They detect the inversion representations at the fixed points of the Brillouin circle. These representations determine the monodromy of the Real eigenline bundle, and this monodromy is measured by the first Stiefel--Whitney class.

\subsection{Local coefficient systems}

The preceding discussion naturally leads to the language of local coefficient systems.
The monodromy:
$$
\rho:\pi_1(S^1)\simeq\bbZ \longrightarrow \{\pm1 \}
$$
defines a rank-one local system. More generally, if $A$ is an abelian group equipped with the action
$a\longmapsto \rho a$, one obtains a local coefficient system $A_\rho$ \cite{hatcher,kirk}. The universal covering
$p:\mathbb R\to S^1$ is trivial, but the deck transformation group acts non-trivially on the coefficient group. Thus the cohomology groups
$H^\bullet(S^1;A_\rho)$ are determined not only by the topology of $S^1$ but also by the monodromy representation $\rho$.

For instance, take $A=\bbZ$ and the non-trivial representation $\rho(1)=-1$.

Consider the standard CW decomposition of the circle with one
0-cell $e^0$ and one 1-cell $e^1$. Its universal covering is the real line $\bbR$, whose cellular
chain complex is:
$$
0 \longrightarrow C_1(\widetilde S^1)
\xrightarrow{\partial}
C_0(\widetilde S^1)
\longrightarrow 0.
$$
Both chain groups are free modules over the group ring $\mathbb Z[\pi_1(S^1)]\simeq\bbZ[t,t^{-1}]$,
and the cellular boundary operator is $\partial = t-1$, where $t$ denotes the generator of $\pi_1(S^1)\simeq\bbZ$.
Given a local coefficient system determined by the representation:
$$
\rho:\pi_1(S^1)\to\{\pm1\},
$$
one tensors this complex with the corresponding $\bbZ[\bbZ]$-module.
The boundary operator becomes:
$$
\partial=\rho(t)-1.
$$
For the non-trivial local system, $\rho(t)=-1$, hence $\partial=-2$. It follows that
$$
H^0(S^1;\bbZ_\rho)=0,
\,
H^1(S^1;\bbZ_\rho)\simeq \bbZ_2.
$$
The equality $H^0(S^1;\bbZ_\rho)=0$ expresses the fact that there is no non-zero element of the coefficient group $\bbZ$
which is invariant under the twisted monodromy $a\mapsto -a$. The group $H^1(S^1;\bbZ_\rho)\simeq\bbZ_2$ is the cohomological signature of this non-trivial monodromy.

It is important to note that the local system is not chosen independently of the physics. It is determined by the band in the following sequence:
$$
\text{band}
\longrightarrow
(s_0,s_\pi)
\longrightarrow
\rho
\longrightarrow
A_\rho.
$$
Thus the twisted cohomology is the natural cohomology theory associated with the Real eigenline bundle of the band.

In the Hermitian case, the same invariant is usually expressed through the Berry--Zak phase. If $\calA$ denotes the Berry connection of a normalized Bloch eigenvector, then the Berry--Zak holonomy is $\operatorname{Hol}(\calA)=\exp\left(-\int_{S^1}\calA\right)$.
In inversion-symmetric one-dimensional systems, the Berry--Zak phase is quantized to $0$ or $\pi$ \cite{kohn,zak}.
The corresponding holonomy therefore takes values in $\{\pm 1\}$. The pole-zero formalism identifies this
quantized holonomy with the monodromy $\rho$.
Thus the monodromy $\rho$ is the Real, connection-free version of the Berry--Zak holonomy.

\begin{prop}
The pole-zero pattern determines the monodromy representation
$$
\rho:\pi_1(S^1)\simeq\mathbb Z\to\{\pm1\}.
$$
Equivalently, it determines the associated rank-one local system.
Different ordered patterns may nevertheless determine the same monodromy.
\end{prop}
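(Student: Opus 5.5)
The plan is to obtain this proposition as a composition of results already established, followed by an explicit example showing that the map from patterns to monodromies is not injective.

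First, by the earlier proposition reformulating the result of \cite{FelbacqAnnalen}, the pole-zero pattern at the two band edges determines the inversion eigenvalues. A pole at a fixed point corresponds to the Neumann direction $N=[1:0]$ and hence to $s=+1$. A zero corresponds to the Dirichlet direction $D=[0:1]$ and hence to $s=-1$. This identification rests on the proposition stating that $N$ and $D$ are the only projective fixed points of $\chi\mapsto-\chi$. So the ordered pattern (one of $PP,ZZ,PZ,ZP$) fixes the pair $(s_0,s_\pi)$ uniquely.

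Second, I invoke the proposition $\rho=s_0s_\pi$, equivalently Theorem \ref{thm1}. Since $\pi_1(S^1)\simeq\mathbb Z$ is free on one generator, a homomorphism $\pi_1(S^1)\to\{\pm1\}$ is determined by the image of the generator $q\mapsto q+2\pi$. That image is $\rho(1)=s_0s_\pi$, so the pattern determines the representation. The proposition on local systems then gives the rank-one local system, because rank-one local systems over $S^1$ with coefficients in $\{\pm1\}$ correspond, up to isomorphism, to such representations. This is the equivalence claimed in the second sentence.

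Third, for non-injectivity I exhibit the distinct ordered patterns $PP$ and $ZZ$. They have $(s_0,s_\pi)=(+,+)$ and $(-,-)$ respectively, and both give $\rho=+1$. Likewise $PZ$ and $ZP$ both give $\rho=-1$. The map from patterns to monodromy therefore factors through the product $s_0s_\pi$ and forgets the individual signs, as the table records.

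The only delicate point is the first step: the identification of pole/zero with the inversion eigenvalue depends on the choice of origin at an inversion centre. For a complete argument I would note that shifting the origin by half a period swaps $N$ and $D$ at both edges simultaneously. This leaves the product $s_0s_\pi$, and hence $\rho$, unchanged, so the conclusion is independent of this choice.
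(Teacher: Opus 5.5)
Your first three steps are the paper's proof. A pole is the Neumann fixed point with $s=+1$, and a zero is the Dirichlet fixed point with $s=-1$. The proposition $\rho=s_0s_\pi$ then gives the monodromy, and $PZ$, $ZP$ (likewise $PP$, $ZZ$) yield the same sign. Your remarks that a homomorphism out of $\pi_1(S^1)\simeq\bbZ$ is fixed by the image of the generator, and that rank-one $\{\pm1\}$ local systems correspond to such homomorphisms, are correct details the paper leaves implicit.

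Your last paragraph, however, makes a false claim and should be removed. Moving the origin to the other inversion centre $x=1/2$ replaces the inversion $u(x)\mapsto u(-x)$ by $u(x)\mapsto u(1-x)$. This is the old inversion composed with translation by one period, and that translation acts on a band-edge mode by its multiplier $z=\pm1$. Concretely, take a periodic mode ($q=0$) that is even about $0$: it satisfies $u(1-x)=u(-x)=u(x)$, so it stays even about $1/2$. Take instead an antiperiodic mode ($q=\pi$) that is even about $0$: it satisfies $u(1-x)=-u(-x)=-u(x)$, so $u(1/2)=0$ and it is Dirichlet at the new centre. Hence the half-period shift swaps $N$ and $D$ only at the $q=\pi$ edge. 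It sends $(s_0,s_\pi)\mapsto(s_0,-s_\pi)$ and therefore $\rho\mapsto-\rho$, which is the familiar fact that the Zak phase jumps by $\pi$ when the origin is moved by half a period. So $\rho$ is \emph{not} independent of the choice of inversion centre.

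The proposition itself survives. The pattern and $\rho$ are computed relative to the same centre (the paper fixes $V(x)=V(-x)$ about $x=0$), and they change together. The correct remark is this covariance, not invariance, and the proof does not need it.
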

\begin{proof}
A pole corresponds to the Neumann fixed direction and therefore to the inversion eigenvalue $+1$, whereas a zero corresponds to the Dirichlet fixed direction and therefore to the inversion eigenvalue $-1$. Hence $PP,\ ZZ \Longrightarrow s_0s_\pi=+1$, while $PZ,\ ZP \Longrightarrow s_0s_\pi=-1$. By the previous proposition: $\rho=s_0s_\pi$. In particular, $PZ$ and $ZP$ determine the same non-trivial local system, although they remain distinct as ordered patterns. Therefore the pole-zero pattern completely determines the monodromy representation and hence the associated local system.
\end{proof}

The local system emerges from the failure of an equivariant Bloch section to descend from the universal covering to the Brillouin circle. The pole-zero pattern determines the inversion representations at the fixed points, which in turn determine the monodromy representation and therefore the associated local system. The importance of the local-system viewpoint is that it
identifies the geometric object underlying the pole-zero formalism. In the original pole-zero description, the invariant
appears as a combinatorial pattern of poles and zeros. The local-system construction shows that this pattern
is in fact the monodromy representation associated with the failure of equivariant gluing of Bloch eigenvectors.

Thus the local system provides the geometric mechanism from which the invariant originates.

\section{A numerical illustration}
In order to give a concrete illustration of the previous constructions, we consider two stratified media whose basic period is made of two homogeneous layers of widths $h_1,h_2$ and permittivities $\varepsilon_1,\,\varepsilon_2$. For each medium, we compute the monodromy matrices $M_1,\,M_2$ at a wavenumber $k_0$, extract the band structures and the functions $\chi_2,\chi_1$ \cite{FelbacqAnnalen,felbacq1998,felbacq2003}. Finally, we consider a finite sample of 10 periods of each structure that we put side by side and compute the transmission spectrum, so as to put in evidence the existence of a boundary mode.
The numerical values characterizing the period structures are the following:
\begin{itemize}
\item Structure 1: $h_1=0.5; h_2=0.5$, \,  $\varepsilon_{1}=3.8,\,\varepsilon_{2}=1$,
\item Structure 2: $h_1=0.3; h_2=0.7$, \,  $\varepsilon_{1}=4.2,\,\varepsilon_{2}=1$.
\end{itemize}
In fig. (\ref{band2}), we give the band structures and the transmission spectrum.
\begin{figure}[h!]
	\begin{center}
		\includegraphics[width=12cm]{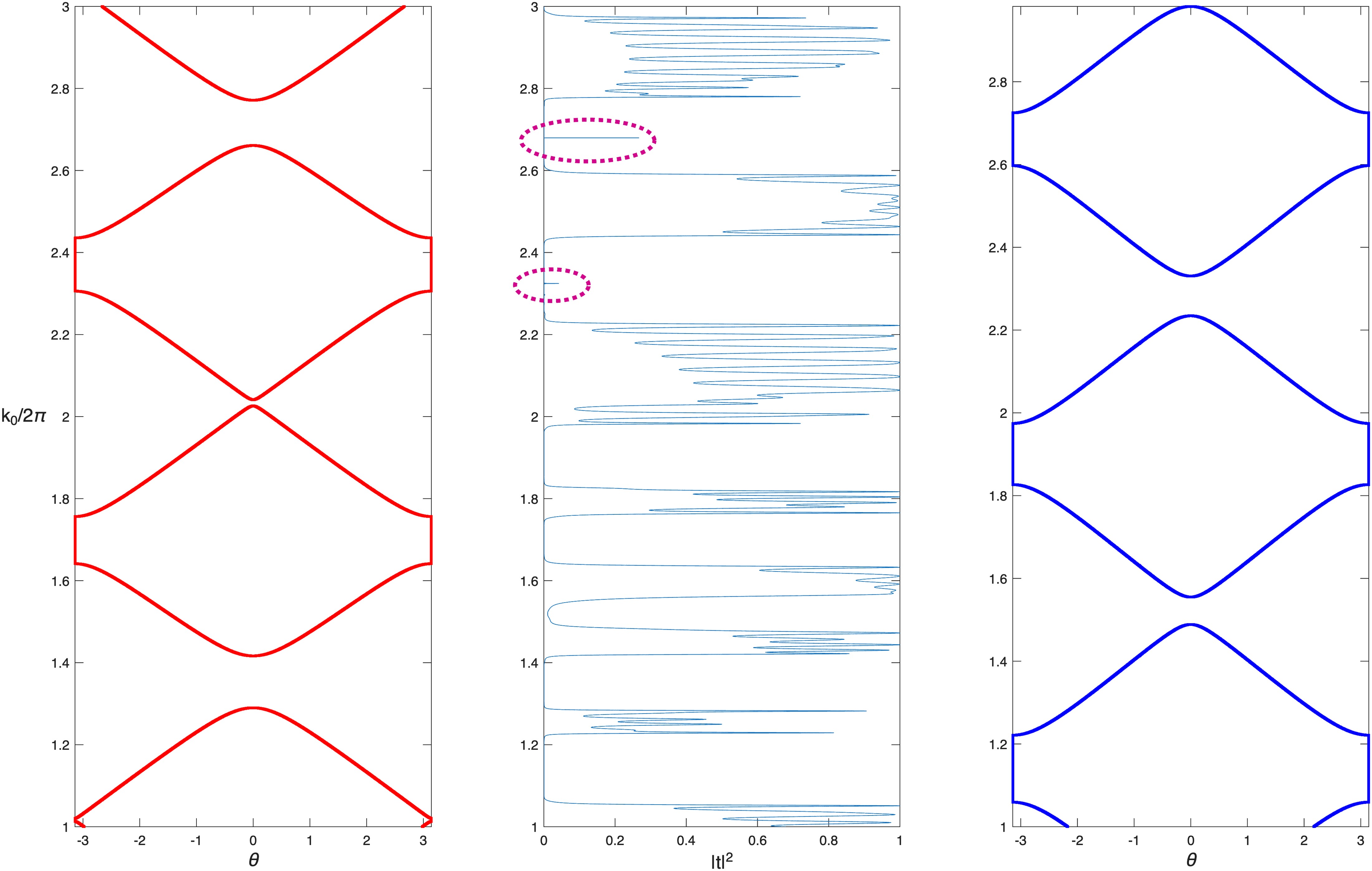}
		\caption{\label{band2} The left and right panels give the band diagram of each structure. Two common band gaps are opened for $k_0 /2\pi$ around $2.3$ and $2.7$. The peaks in the transmission spectrum, surrounded by an ellipsis, correspond to edge modes, i.e. a mode decreasing exponentially in each structure, at these values of the wavenumber. }
	\end{center}
\end{figure}
In fig. (\ref{chim}), we plot the functions $\chi_1$ and $-\chi_2$ as well as the norm commutant of the monodromy matrices, it is null whenever an edge mode appears and $\chi_1=-\chi_2$ \cite{FelbacqAnnalen}. As indicated by the ellipsis, there are two values of $k_0$ for which the functions cross and the monodromy matrices commute: for $k_0/2\pi\simeq 2.3$ and $2.7$. The existence of edge modes inside the gaps is confirmed in fig. \ref{band2}, where two peaks appear in the transmission diagram (indicated by ellipsis). In the common gaps displayed in fig.(\ref{chim}), the two structures exhibit opposite Pole-Zero orderings, namely $ZP$ for $\chi_1$ and $PZ$ for $\chi_2$. Both correspond to the non-trivial value $\rho=-1$, but the opposite ordering gives the impedance crossing criterion of \cite{FelbacqAnnalen}. The observed transmission peaks therefore illustrate the bulk-edge mechanism discussed below.
\begin{figure}[h!]
	\begin{center}
		\includegraphics[width=10cm]{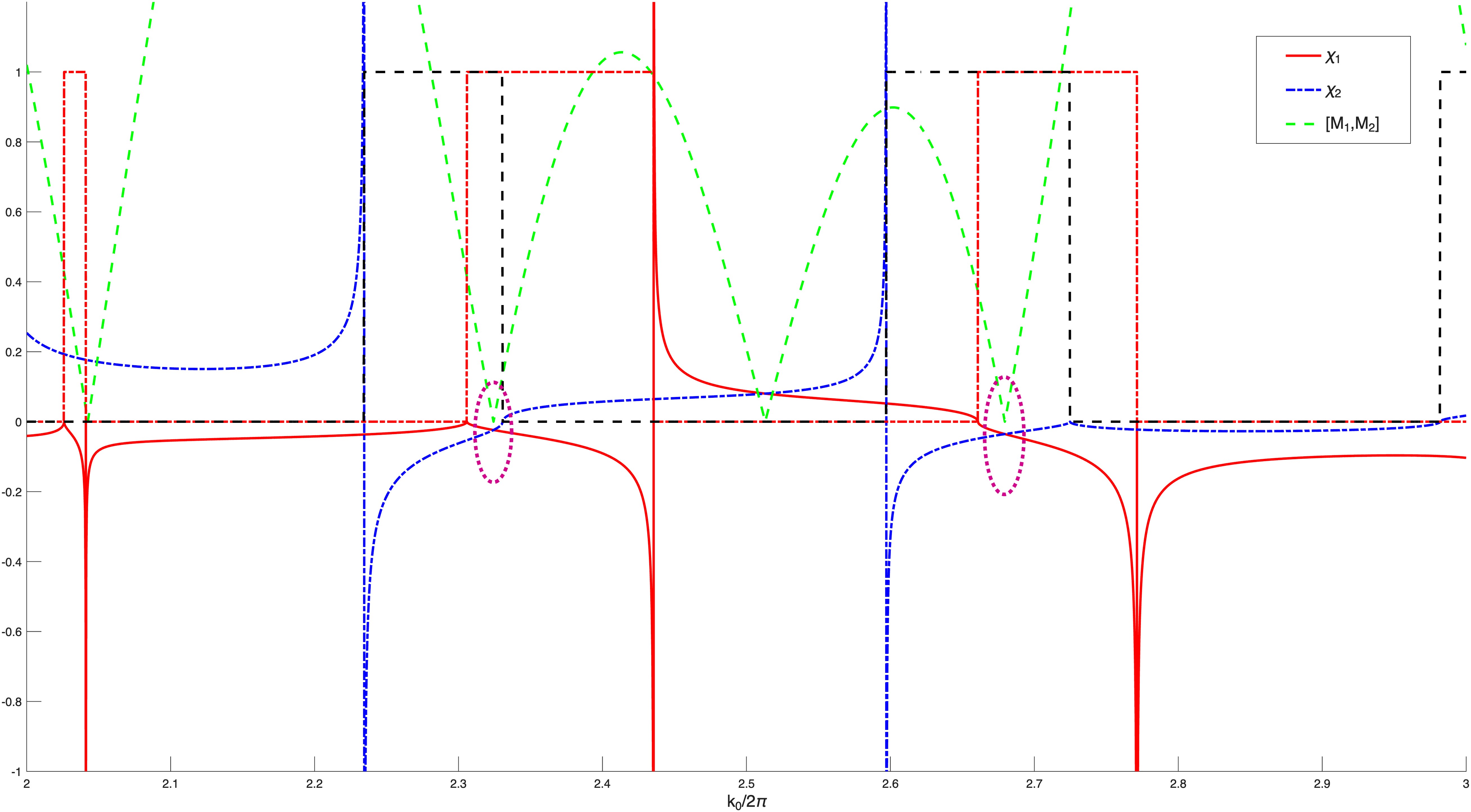}
		\caption{\label{chim} In blue and red: graphs of the functions $\chi_1$ and $-\chi_2$. They cross at values of $k_0$ for which there is an edge mode. Note that for both modes, the Pole-Zero structure is $ZP$ for $\chi_1$ and $PZ$ for $\chi_2$. In green: graph of the norm of the commutant of the monodromy matrices. It is null at an edge mode. The positions of the two edge modes are indicated by the dotted ellipsis. The value 1 of the indicator functions represent the positions of the gaps.}
	\end{center}
\end{figure}

\section{Bulk-edge correspondence and perspectives}
\subsection{The bulk-edge correspondence}

The pole-zero formalism was originally introduced in order to characterize topological interface states in one-dimensional periodic
media. For two semi-infinite inversion-symmetric media joined at an interface, each medium carries a pole-zero pattern in each gap.

The existence of an interface state is governed by the relative position of the pole-zero data of the two media. In the formulation of
\cite{FelbacqAnnalen}, this appears through the crossing condition between the impedance coordinates, namely the crossing of $\chi_1$ and
$-\chi_2$, together with the compatibility condition on the monodromy matrices.

This criterion is more refined than the value of the first Stiefel--Whitney class alone. Indeed, the patterns $PZ$ and $ZP$ both
correspond to the non-trivial monodromy $\rho=-1$ and hence to the same value of $w_1$. Nevertheless, their opposite ordering in the gap
is precisely what produces the impedance crossing and hence the localized interface mode.

Thus the role of the Real-bundle and local-system viewpoint is not to replace the pole-zero bulk-edge criterion by the sole comparison of
Stiefel--Whitney classes. Rather, it explains the geometric origin of the pole-zero data entering that criterion. The edge state reflects a
relative mismatch of projective Cauchy data across the interface, which is detected analytically by the pole-zero ordering and physically by
the crossing of the impedance functions.

\subsection{Perspectives}

The present paper has focused on the one-dimensional Hermitian and inversion-symmetric case, where the geometric structure can be described completely in terms of Real line bundles over the Brillouin circle. Several extensions can be envisioned.

First, the pole-zero construction extends naturally to non-Hermitian systems \cite{FelbacqAnnalen}. In that case the Bloch multipliers and energies become complex, and the natural object is the spectral curve :
$$
\Sigma=\{(E,z):\det(\mathcal M(E)-zI)=0\}.
$$
The eigenline then defines a map:
$$
\Sigma\longrightarrow \mathbf{CP}^1.
$$
From this point of view, poles and zeros are the preimages of the Dirichlet and Neumann fixed points of the inversion action. This suggests that the homology of the spectral curve should play a role analogous to the Brillouin circle in the Hermitian setting.

Second, in higher-dimensional periodic media, the scalar Cauchy datum $(u,u')$ should be replaced by Cauchy data on hypersurfaces or cycles. The analogue of the impedance function is then expected to be a Dirichlet-to-Neumann or Neumann-to-Dirichlet operator. The corresponding projective geometry is no longer the Riemann sphere but an infinite-dimensional space of boundary data. Understanding the action of spatial symmetries on these Cauchy-data spaces may provide a route toward higher-dimensional generalizations of the present construction.

Finally, the local-system viewpoint emphasizes that the topological information carried by a Bloch band is encoded by a monodromy representation of the fundamental group. In the present one-dimensional setting this reduces to the first Stiefel?Whitney class, but the same mechanism may persist in more general situations.
\section{Conclusion}

We have given a geometric interpretation of the pole-zero invariant introduced in \cite{FelbacqAnnalen}. The key observation is that Bloch eigenvectors may be regarded as Cauchy data and hence as points of the projective Cauchy space $\mathbf{CP}^1$.
Inversion symmetry acts on this space, and its two fixed points are precisely the Dirichlet and Neumann directions. Poles and zeros therefore correspond to the two irreducible representations of the inversion group.

The pole-zero pattern determines the inversion eigenvalues at the fixed points of the Brillouin circle. Their product defines the monodromy
$\rho=s_0s_\pi$ of the associated Real eigenline bundle. This monodromy is measured by the first Stiefel--Whitney class
$w_1(L_n)\in H^1(S^1;\mathbb Z_2)$. Thus the $\mathbb Z_2$ pole-zero invariant coincides with $w_1(L_n)$.
The ordered pole-zero pattern contains additional relative information, which enters the interface criterion through the crossing of the
impedance functions.

The construction also explains why local systems appear naturally. Bloch eigenvectors are most naturally constructed on the universal covering of the Brillouin circle, and the deck transformation acts on the lifted section by the monodromy sign $\rho$. This action defines a local coefficient system, whose twisted cohomology records the obstruction to choosing a global equivariant section.

In this way, the pole-zero formalism, the Berry--Zak phase, Real line bundles, local coefficient systems and the first Stiefel--Whitney class are unified within a single geometric framework.

\end{document}